\renewcommand\@biblabel[1]{#1.}
\def\keywordsname{\textbf{Keywords}}
  \def\mathsubclassname{\textbf{2010 AMS Subject Classification}}
\newcommand\be{\begin{equation}}
\newcommand\ee{\end{equation}}
 \newtheorem{thm}{Theorem}[section]
 \newtheorem{corollary}[thm]{Corollary}
 \newtheorem{lemma}[thm]{Lemma}
 \theoremstyle{definition}
 \theoremstyle{remark}
 \newtheorem{remark}[thm]{Remark}
 \numberwithin{equation}{section}
\begin{document}

\title{Exact solution for a two-phase Stefan problem with variable latent heat and a convective boundary condition at the fixed face.}

\author{
Julieta Bollati$^{1}$,  Domingo A. Tarzia $^{1}$\\ \\
\small {{$^1$} Depto. Matem\'atica - CONICET, FCE, Univ. Austral, Paraguay 1950} \\  
\small {S2000FZF Rosario, Argentina.}\\
\small{Email: JBollati@austral.edu.ar; DTarzia@austral.edu.ar.} 
}
\date{}

\maketitle



\begin{abstract}

Recently it was obtained in [Tarzia, Thermal Sci. 21A (2017) 1-11] for the classical two-phase Lam\'e-Clapeyron-Stefan problem an equivalence between the temperature and convective boundary conditions at the fixed face under a certain restriction. Motivated by this article we study the two-phase Stefan problem for a semi-infinite material with a latent heat defined as a power function of the position and a convective boundary condition at the fixed face. An exact solution is constructed using Kummer functions in case that an inequality for the convective transfer coefficient is satisfied generalizing  recent works for the corresponding one-phase free boundary problem. We also consider the limit to our problem when that coefficient goes to infinity obtaining a new free boundary problem, which has been recently studied in [Zhou-Shi-Zhou, J. Engng. Math. (2017) DOI 10.1007/s10665-017-9921-y].


\keywords{Stefan problem, Phase-change processes, Variable latent heat, Convective boundary condition, Kummer function, Explicit solution, Similarity solution.}
\end{abstract}

\maketitle
\section{Introduction}

The study of heat transfer problems with phase-change such as melting and freezing have attracted growing attention in the last decades due to their wide range of engineering and industrial applications.   Stefan problems can be modelled as basic phase-change processes where the location of the interface is a priori unknown. They  arise in a broad variety of fields like melting, freezing, drying, friction, lubrication, combustion, finance, molecular diffusion, metallurgy and crystal  growth. Due to their importance, they have been largely studied since the last century \cite{AlSo}, \cite{Ca}-\cite{Gu}, \cite{Lu}, \cite{Ru}, \cite{Ta4}. For an account of the theory we refer the reader to \cite{Ta2}.

In the classical formulation of Stefan problems, there are many assumptions on the physical factors involved that are taken into account  in order to simplify the description of the process. 
The latent heat, which is the energy required to accomplish the phase change, is usually considered  constant. However in many practical  problems a constant latent heat may be not appropriate, being necessary     to assume a variable one. The physical bases of this particular assumption can be found in the movement of a shoreline \cite{VSP}, in the ocean delta deformation \cite{LoVo} or in the cooling body of a magma \cite{Per}.

In   \cite{Pr}  sufficient conditions for the existence and uniqueness of solution of a one-phase Stefan problem taking a latent heat as a general function of the position were found.  
In \cite{VSP}, as well as  in \cite{SaTa}  an exact solution was found for a one-phase and two-phase Stefan problem respectively  considering the latent heat as a linear function of the position. \cite{ZWB} generalized \cite{VSP} by considering the one-phase Stefan problem with the latent heat as a power function of the position with an integer exponent. 
Recently in \cite{ZhXi}  the latter problem was studied assuming a real non-negative exponent. It was presented the explicit solution for two different problems 
defined according to the boundary conditions considered: temperature and flux.

Boundary conditions imposed at a surface of a body in order to have a well-posed  mathematical problem, can be specified in terms of temperature or energy flow. One of the most realistic boundary conditions is the convective one, in which the heat flux depends not only on the ambient conditions  but also on the temperature of the surface itself. In \cite{Ta1} it was studied the relationship between a classical two-phase Stefan problem considering temperature and convective boundary condition at the fixed face $x=0$. In \cite{BrNa}, a nonlinear one-phase Stefan problem with a convective boundary condition in Storm's materials was studied.

Motivated by \cite{Ta1} and \cite{ZhXi}, in \cite{BoTa} we studied the one-phase Stefan problem considering a variable latent heat and a convective boundary condition at the fixed face $x=0$.
In the present paper we are going to analyse the existence and uniqueness of solution of a two-phase Stefan problem, considering an homogeneous semi-infinite material, with a latent heat as a power function of the position and a convective boundary condition at the fixed face $x=0$. This problem can be formulated in the following way: find the temperatures $\Psi_l(x,t)$, $\Psi_s(x,t)$ and the moving melt interface $s(t)$ such that:
\begin{eqnarray}
& &{\Psi_{l}}_t(x,t)=d_l {\Psi_l}_{xx}(x,t), \qquad 0<x<s(t), \quad t>0,\qquad\qquad  \label{1}\\
& &{\Psi_s}_t(x,t)=d_s {\Psi_s}_{xx}(x,t),\qquad x>s(t),\quad t>0,\label{2}\\
& & s(0)=0,\label{3}\\
& & \Psi_l(s(t),t)=\Psi_s(s(t),t)=0, \qquad t>0, \label{4}\\
& & k_s {\Psi_s}_x(s(t),t)-k_l {\Psi_l}_x (s(t),t)=\gamma s(t)^{\alpha} \dot s(t), \qquad t>0, \label{5}\\
& & k_l {\Psi_l}_x(0,t)=h_0 t^{-1/2} \left[ \Psi_l(0,t)-T_{\infty}t^{\alpha/2}\right]  \qquad t>0, \label{Convect}\\
& & \Psi_s(x,0)=-T_ix^{\alpha} , \qquad x>0 \label{tempInit}.
\end{eqnarray}
\noindent where the liquid (solid) phase is represented by the subscript $l$ ($s$), $\Psi$ is the temperature, $d$ is the diffusion coefficient,  $\gamma x^{\alpha}$ is the variable latent heat  per  unity of volume, $-T_ix^{\alpha}$ is the depth-varying initial temperature and the phase-transition temperature is zero. Condition (\ref{Convect}) represents the convective boundary condition at the fixed face $x=0$ . $T_{\infty}$ is the bulk temperature at a large distance from the fixed face $x=0$ and $h_0$ is the coefficient that characterizes the heat transfer at the fixed face. Moreover $\dot s(t)$ represents the velocity of the phase-change interface. We will work under the assumption  that $\gamma, T_i, T_{\infty}, h_0>0$ which corresponds to the melting case.

 In Section 2 we will quickly review fundamental results that allow us to apply the similarity transformation technique to our problem. We will analyse the fusion of a semi-infinite material which is initially at the solid phase, where  a convective condition is imposed at the fixed boundary $x=0$ and where the latent heat is considered as a power function of the position with power $\alpha$. In Section 3 we will provide an explicit solution of a similarity type of the problem (\ref{1})-(\ref{tempInit}) under certain conditions on the data, proving in addition its uniqueness in case that $\alpha$ is a positive non-integer exponent. We will study the particular case when $\alpha$ is a non-negative integer, recovering  for $\alpha=0$ the results obtained by \cite{Ta1}.  Finally  Section 4 will show that the solution to our problem converges to the solution of a different free boundary problem with a prescribed temperature at $x=0$ when  the coefficient $h_0\rightarrow +\infty$ which has been recently studied in \cite{ZhShZh2017}.

The main contribution of this paper is to generalize the work that has been done in  \cite{Ta1},\cite{ZhXi} and \cite{BoTa}, by obtaining the explicit solution of a one-dimensional two-phase Stefan problem for a semi-infinite material where a  variable latent heat and a convective boundary condition at the fixed face is considered; as well as to obtain the results given in \cite{ZhShZh2017} when the coefficient that characterizes the convective boundary condition goes to infinity.

\section{Explicit solution with latent heat depending on the position and a convective boundary condition at $x=0$.}

In this section it will be found the explicit solution of the problem governed by (\ref{1})-(\ref{tempInit}). The proof will be splitted into two subsections. The first one results from the work of Zhou and Xia in \cite{ZhXi} and  corresponds to the case when $\alpha$ is  positive and non-integer. The second one is correlated with the case when $\alpha$ is a non-negative integer, based on \cite{ZWB}.

\subsection{Case when $\alpha$ is a positive non-integer exponent.}

The following lemma has already been developed by Zhou-Xia in \cite{ZhXi} and  constitutes the base on which we will find solutions for the differential heat equations (\ref{1})-(\ref{2}).
\smallskip
\smallskip
\begin{lemma} \text{}  \label{Lemma2.1}

\begin{enumerate}
\item Let 
\begin{equation}\label{similarity}
\Psi(x,t)=t^{\alpha/2}f(\eta), \text{ with \quad} \eta=\dfrac{x}{2\sqrt{dt}}
\end{equation}

then $\Psi=\Psi(x,t)$ is a solution of the heat equation $\Psi_t(x,t)=d\Psi_{xx}(x,t)$, with $d>0$ if and only if $f=f(\eta)$ satisfies the following ordinary differential equation:

\begin{equation}\label{diffeq1}
\frac{d^2f}{d\eta^2}(\eta)+2\eta \frac{df}{d\eta}(\eta)-2\alpha f(\eta)=0.
\end{equation}

\item An equivalent formulation for equation (\ref{diffeq1}), introducing the new variable $z=-\eta^2$ is:
\begin{equation}\label{diffeq2}
z\frac{d^2f}{dz^2}(z)+\left(\frac{1}{2}-z \right)\frac{df}{dz}(z)+\frac{\alpha}{2}f(z)=0.
\end{equation}

\item The general solution of the ordinary differential equation (\ref{diffeq2}), called Kummer's equation, is given by:
\begin{equation}\label{genSol1}
f(z)=\widehat{c_{11}}M \left(-\dfrac{\alpha}{2},\dfrac{1}{2},z\right)+\widehat{c_{21}}U\left(-\dfrac{\alpha}{2},\dfrac{1}{2},z\right) .
\end{equation}

\noindent where $\widehat{c_{11}}$ and $\widehat{c_{21}}$ are arbitrary  constants and $M(a,b,z)$ and $U(a,b,z)$ are the Kummer functions defined by:

\begin{align}
& M(a,b,z)=\sum\limits_{s=0}^{\infty}\frac{(a)_s}{(b)_s s!}z^s , \text{ where b cannot be a  nonpositive integer,} \label{M} \\
& U(a,b,z)=\tfrac{\Gamma(1-b)}{\Gamma(a-b+1)}M(a,b,z)+\tfrac{\Gamma(b-1)}{\Gamma(a)} z^{1-b}M(a-b+1,2-b,z) \label{U}.
\end{align}
where $(a)_s$ is the pochhammer symbol:
\begin{equation}
 (a)_s=a(a+1)(a+2)\dots (a+s-1), \quad \quad (a)_0=1 
\end{equation}

\end{enumerate}
\end{lemma}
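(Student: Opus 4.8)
The plan is to verify the three claims in sequence, each being essentially a change of variables followed by a recognition step. For part (1), I would substitute the similarity ansatz $\Psi(x,t)=t^{\alpha/2}f(\eta)$ with $\eta=x/(2\sqrt{dt})$ directly into the heat equation. Computing $\Psi_t$ by the product and chain rules gives a term from differentiating $t^{\alpha/2}$ (producing $\tfrac{\alpha}{2}t^{\alpha/2-1}f(\eta)$) and a term from differentiating $\eta$ with respect to $t$ (noting $\partial\eta/\partial t=-\tfrac{1}{2}\eta/t$), yielding $-\tfrac{1}{2}t^{\alpha/2-1}\eta f'(\eta)$. For the spatial side, $\Psi_{xx}=t^{\alpha/2}f''(\eta)\cdot(2\sqrt{dt})^{-2}=\tfrac{1}{4dt}t^{\alpha/2}f''(\eta)$, so $d\Psi_{xx}=\tfrac{1}{4}t^{\alpha/2-1}f''(\eta)$. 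Setting $\Psi_t=d\Psi_{xx}$ and cancelling the common factor $\tfrac14 t^{\alpha/2-1}$ gives $f''(\eta)+2\eta f'(\eta)-2\alpha f(\eta)=0$, which is (\ref{diffeq1}); reversing the algebra gives the converse.

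For part (2), I would set $z=-\eta^2$, so $\eta^2=-z$ and $dz/d\eta=-2\eta$. By the chain rule $df/d\eta=(dz/d\eta)\,df/dz=-2\eta\,df/dz$, and differentiating again, $d^2f/d\eta^2=-2\,df/dz+(-2\eta)^2 d^2f/dz^2=4\eta^2\,d^2f/dz^2-2\,df/dz$. Substituting these into (\ref{diffeq1}) and replacing $\eta^2$ by $-z$ throughout yields $4\eta^2 f_{zz}-2f_z+2\eta(-2\eta f_z)-2\alpha f=-4z f_{zz}-2f_z-4\eta^2 f_z-2\alpha f=-4zf_{zz}-2f_z+4z f_z-2\alpha f=0$; dividing by $-4$ produces $z f_{zz}+(\tfrac12-z)f_z+\tfrac{\alpha}{2}f=0$, which is (\ref{diffeq2}). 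This is recognised as Kummer's confluent hypergeometric equation $z f_{zz}+(b-z)f_z-af=0$ with $b=\tfrac12$ and $a=-\tfrac{\alpha}{2}$.

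For part (3), since $\alpha$ is a positive non-integer, $b=\tfrac12$ is not a nonpositive integer (indeed $b-1=-\tfrac12\notin\mathbb{Z}$), so the two Kummer solutions $M(-\tfrac{\alpha}{2},\tfrac12,z)$ and $U(-\tfrac{\alpha}{2},\tfrac12,z)$ are well-defined and linearly independent, whence the general solution of the second-order linear ODE is their linear combination (\ref{genSol1}); this is the standard solution theory for Kummer's equation together with the definitions (\ref{M})--(\ref{U}). I do not anticipate a genuine obstacle here: the content is entirely a careful bookkeeping of the chain rule in parts (1) and (2), and a citation of the classical theory of confluent hypergeometric functions in part (3). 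The only point requiring a little care is tracking the sign introduced by $z=-\eta^2$ so that the coefficient of $f$ in (\ref{diffeq2}) comes out as $+\tfrac{\alpha}{2}$ and the identification $a=-\tfrac{\alpha}{2}$ is made correctly.
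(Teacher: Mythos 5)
Your proposal is correct and is the standard direct verification; the paper itself offers no argument here, simply citing \cite{ZhXi}, and your computation is exactly what that reference (and the similarity-solution literature generally) does: substitute the ansatz, cancel $t^{\alpha/2-1}$, change variables $z=-\eta^2$, and recognise Kummer's equation with $a=-\alpha/2$, $b=\tfrac12$. The only point to tighten is in part (3): the condition that $b=\tfrac12$ is not a nonpositive integer guarantees that $M$ is well defined, but the linear independence of $M(a,b,z)$ and $U(a,b,z)$ rests on $a=-\alpha/2$ not being a nonpositive integer (the Wronskian is proportional to $z^{-b}e^{z}/\Gamma(a)$), which holds precisely because $\alpha$ is assumed positive and non-integer --- so your conclusion stands, but the justification should point at $a$ rather than $b$.
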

\begin{proof}
See \cite{ZhXi}.
\end{proof}

\begin{remark}
All the properties of Kummer's functions to be used in the following arguments can be found in the Appendix A.
\end{remark}

\begin{remark} \label{Remark1}
Taking into account  equation (\ref{genSol1}) and definition (\ref{U}) we can rewrite the general solution of the ordinary differential equation (\ref{diffeq2}) as:
\begin{equation}\label{genSol2}
f(z)=\overline{c_{11}}M \left(-\dfrac{\alpha}{2},\dfrac{1}{2},z\right)+\overline{c_{21}}z^{1/2} M\left(-\dfrac{\alpha}{2}+\dfrac{1}{2},\dfrac{3}{2},z\right),
\end{equation}
where $\overline{c_{11}}$ and $\overline{c_{21}}$ are arbitrary constants.
\end{remark}
                                                                                                                                                                                                                                                                    
\begin{remark} \label{Remark1-2}
Taking into account Lemma \ref{Lemma2.1} and Remark \ref{Remark1} we can assure that \mbox{$\Psi(x,t)=t^{\alpha/2}f(\eta)$} satisfies the heat equation $\Psi_t(x,t)=d\Psi_{xx}(x,t)$ if and only if it is defined as:
\begin{equation}\label{genSol2}
\Psi(x,t)=t^{\alpha/2}\left[ c_{11}M \left(-\dfrac{\alpha}{2},\dfrac{1}{2},-\eta^2\right)+c_{21}\eta M\left(-\dfrac{\alpha}{2}+\dfrac{1}{2},\dfrac{3}{2},-\eta^2\right) \right],
\end{equation}
with $\eta=\tfrac{x}{2\sqrt{dt}}$ and where $c_{11}$ and $c_{21}$ are arbitrary constants (not necessarily real).
\end{remark}

Our main outcome is given by the following theorem, which constitutes a generalization to the two-phase case of \cite{BoTa}. This theorem ensures the existence and uniqueness of solution of the problem (\ref{1})-(\ref{tempInit}) under a restriction for the convective coefficient, providing in addition the explicit solution.

\begin{thm}\label{TeoPcipal}
If the  coefficient $h_0$ satisfies the inequality:
\begin{equation}\label{DesH0}
h_0> \dfrac{2^{\alpha} \Gamma\left(\dfrac{\alpha}{2}+1\right) k_s T_id_s^{(\alpha-1)/2}}{ T_{\infty}  \sqrt{\pi}} 
\end{equation}
then there exists an instantaneous fusion process and the free boundary problem (\ref{1})-(\ref{tempInit}) has a unique solution of a similarity type given by:
\begin{align}
&s(t) =2\nu \sqrt{d_l t},\label{s}\\
&\Psi_l(x,t)=  t^{\alpha/2}\left[ E_l M\left(-\frac{\alpha}{2}, \frac{1}{2},-\eta_l^2\right)+F_l \eta_l M\left(-\frac{\alpha}{2}+\frac{1}{2},\frac{3}{2},-\eta_l^2\right)\right],\label{PsiLiq}\\
&\Psi_s(x,t)=t^{\alpha/2}\left[ E_s M\left(-\frac{\alpha}{2}, \frac{1}{2},-\eta_s^2\right)+F_s \eta_s M\left(-\frac{\alpha}{2}+\frac{1}{2},\frac{3}{2},-\eta_s^2\right)\right], \label{PsiSol} 
\end{align}
where $\eta_l=\tfrac{x}{2\sqrt{d_lt}}$, $\eta_s=\tfrac{x}{2\sqrt{d_st}}$  and the constants $E_l$, $F_l$, $E_s$ and $F_s$ are given by:
\begin{align}
& E_l=\dfrac{-\nu M\left(-\dfrac{\alpha}{2}+\dfrac{1}{2},\dfrac{3}{2},-\nu^2\right)}{M\left(-\dfrac{\alpha}{2},\dfrac{1}{2},-\nu^2\right)}F_l, \label{20} \\
&F_l=\dfrac{-h_0 T_{\infty}2 \sqrt{d_l}M\left(-\dfrac{\alpha}{2},\dfrac{1}{2},-\nu^2\right)}{\left[ k_lM\left(-\dfrac{\alpha}{2},\dfrac{1}{2},-\nu^2\right)+2\sqrt{d_l}h_0\nu M\left(-\dfrac{\alpha}{2}+\dfrac{1}{2},\dfrac{3}{2},-\nu^2\right)\right]},\label{21}\\
& E_s=\dfrac{-\nu \omega M\left(-\dfrac{\alpha}{2}+\dfrac{1}{2},\dfrac{3}{2},-\nu^2 \omega^2\right)}{M\left(-\dfrac{\alpha}{2},\dfrac{1}{2},-\nu^2\omega^2\right)}F_s,  \qquad \text{ with }\quad   \omega=\sqrt{d_l/d_s},\label{22}\\
& F_s=\dfrac{-T_i 2^{\alpha+1}d_s^{\alpha/2} M\left(\dfrac{\alpha}{2}+\dfrac{1}{2},\dfrac{1}{2},\nu^2\omega^2\right)}{U\left(\dfrac{\alpha}{2}+\dfrac{1}{2},\dfrac{1}{2},\nu^2\omega^2\right)}\label{23}.
\end{align}
and the dimensionless coefficient $\nu$ is the unique positive solution of the following equation:
\begin{eqnarray} \label{24}
-\dfrac{k_sT_id_s^{(\alpha-1)/2}}{\gamma d_l^{(\alpha+1)/2}}f_1(x)+\dfrac{h_0 T_{\infty}}{\gamma 2^{\alpha} d_l^{(\alpha+1)/2}}f_2(x)=x^{\alpha+1}, \qquad x>0.
\end{eqnarray}
in which functions $f_1$ and $f_2$ are defined by:
\begin{align}
&f_1(x)=\dfrac{1}{U\left(\dfrac{\alpha}{2}+\dfrac{1}{2},\dfrac{1}{2},x^2\omega^2\right)}, \qquad x>0,& \label{f1}\\
&f_2(x)=\dfrac{1}{\left[ M\left(\dfrac{\alpha}{2}+\dfrac{1}{2},\dfrac{1}{2},x^2\right)+2\dfrac{\sqrt{d_l}h_0}{k_l}x M\left(\dfrac{\alpha}{2}+1,\dfrac{3}{2},x^2\right)\right]}, \ x>0.&
\end{align}
\end{thm}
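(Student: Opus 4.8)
The plan is to reduce the problem, via Lemma~\ref{Lemma2.1} and Remarks~\ref{Remark1}--\ref{Remark1-2}, to the determination of five numbers. Since we seek a self-similar solution and $s(0)=0$, any candidate must have the form (\ref{s})--(\ref{PsiSol}) for some $\nu>0$ and constants $E_l,F_l,E_s,F_s$, and such a candidate automatically satisfies (\ref{1})--(\ref{3}). It therefore suffices to show that the interface conditions (\ref{4}), the convective condition (\ref{Convect}) and the initial condition (\ref{tempInit}) force $E_l,F_l,E_s,F_s$ to be exactly (\ref{20})--(\ref{23}); that the Stefan condition (\ref{5}) then forces $\nu$ to solve (\ref{24}); and that (\ref{24}) has a unique positive root precisely when (\ref{DesH0}) holds. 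Throughout, the hypothesis that $\alpha$ is positive and non-integer is used so that the connection formula (\ref{U}) involves only finite values of $\Gamma$.

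A key structural remark is that the similarity variables are \emph{constant} along the free boundary: $\eta_l(s(t),t)=\nu$ and $\eta_s(s(t),t)=\nu\omega$ with $\omega=\sqrt{d_l/d_s}$. Hence the conditions (\ref{4}) hold for every $t>0$ if and only if the bracketed expressions in (\ref{PsiLiq}) and (\ref{PsiSol}) vanish at $\eta_l=\nu$ and $\eta_s=\nu\omega$ respectively, which is precisely (\ref{20}) and (\ref{22}). For (\ref{Convect}), evaluating at $x=0$, where $\eta_l=0$, $M(\cdot,\cdot,0)=1$ and the first Kummer term of (\ref{PsiLiq}) is even in $\eta_l$, gives $\Psi_l(0,t)=t^{\alpha/2}E_l$ and ${\Psi_l}_x(0,t)=\tfrac{F_l}{2\sqrt{d_l}}t^{(\alpha-1)/2}$; substituting these into (\ref{Convect}) the factor $t^{(\alpha-1)/2}$ drops out and one is left with a linear relation between $E_l$ and $F_l$, which together with (\ref{20}) yields (\ref{21}).

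For (\ref{tempInit}) I would fix $x>0$ and let $t\to0^+$, so $\eta_s\to+\infty$. Applying the Kummer transformation $M(a,b,z)=e^zM(b-a,b,-z)$ to both terms of (\ref{PsiSol}) and then the asymptotics $M(a,b,w)\sim\tfrac{\Gamma(b)}{\Gamma(a)}e^ww^{a-b}$ as $w\to+\infty$ (Appendix~A), the exponential factors cancel and, combining $t^{\alpha/2}$ with the remaining algebraic factor, one obtains the finite limit
\[
\Psi_s(x,t)\ \longrightarrow\ \frac{x^\alpha}{2^\alpha d_s^{\alpha/2}}\left[E_s\,\frac{\Gamma(1/2)}{\Gamma\left(\tfrac{\alpha}{2}+\tfrac{1}{2}\right)}+F_s\,\frac{\Gamma(3/2)}{\Gamma\left(\tfrac{\alpha}{2}+1\right)}\right].
\]
Setting this equal to $-T_ix^\alpha$, inserting $E_s$ from (\ref{22}), and collapsing the resulting combination of $M$'s by the connection formula (\ref{U}) with $a=\tfrac{\alpha}{2}+\tfrac{1}{2}$, $b=\tfrac{1}{2}$, produces exactly (\ref{23}). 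Finally I would impose the Stefan condition (\ref{5}): with $\dot s(t)=\nu\sqrt{d_l}\,t^{-1/2}$, $s(t)^\alpha=2^\alpha\nu^\alpha d_l^{\alpha/2}t^{\alpha/2}$, and the derivative rule $\tfrac{d}{dz}M(a,b,z)=\tfrac{a}{b}M(a+1,b+1,z)$ (again $\eta_l$ and $\eta_s$ are constant on $x=s(t)$), the common factor $t^{(\alpha-1)/2}$ cancels; after substituting (\ref{20})--(\ref{23}) and re-expressing the Kummer quotients through $f_1$ and $f_2$, whose definitions are tailored precisely for this, the Stefan condition becomes equation (\ref{24}) evaluated at $x=\nu$.

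It then remains to analyse (\ref{24}). Let $F(x)$ denote its left-hand side minus $x^{\alpha+1}$. Using the positivity and monotonicity of $M(a,b,\cdot)$ and $U(a,b,\cdot)$ for positive parameters (Appendix~A) one checks that $f_1$ is strictly increasing and $f_2$ strictly decreasing, so $F$ is strictly decreasing on $(0,+\infty)$; since $f_1(0)=\Gamma\!\left(\tfrac{\alpha}{2}+1\right)/\sqrt{\pi}$ and $f_2(0)=1$, one finds $F(0^+)>0$ \emph{if and only if} (\ref{DesH0}) holds, while $f_1(x)\sim(\omega x)^{\alpha+1}$ and $f_2(x)\to0$ give $F(x)\to-\infty$ as $x\to+\infty$. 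By continuity $F$ has a unique positive zero $\nu$, and then (\ref{20})--(\ref{23}) fix $E_l,F_l,E_s,F_s$; this proves both existence and uniqueness of the similarity solution, with $s(t)=2\nu\sqrt{d_lt}>0$ for $t>0$ giving the instantaneous fusion, and one checks the signs $\Psi_l>0>\Psi_s$ to confirm a genuine melting process. I expect the main obstacle to be this last reduction of the Stefan condition to the compact form (\ref{24}): it requires careful handling of the Kummer derivative identities and of the many cancellations after the substitutions (\ref{20})--(\ref{23}), and it is precisely there that the definitions of $f_1$ and $f_2$ must be engineered so that the subsequent analysis of $F$ recovers exactly the threshold (\ref{DesH0}).
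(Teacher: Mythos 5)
Your proposal is correct and follows essentially the same route as the paper: the interface, convective and initial conditions determine $E_l,F_l,E_s,F_s$ exactly as in (\ref{20})--(\ref{23}), the Stefan condition collapses to the transcendental equation (\ref{24}), and the monotonicity of $f_1$ (increasing, with $f_1(0)=\Gamma(\alpha/2+1)/\sqrt{\pi}$ and $f_1\sim(\omega x)^{\alpha+1}$) and $f_2$ (decreasing from $1$ to $0$) yields a unique positive root precisely under (\ref{DesH0}). The only, harmless, deviation is in the limit $t\to0^+$ enforcing (\ref{tempInit}): you use the Kummer transformation (\ref{MRelExp1}) together with the large-argument asymptotics of $M$, whereas the paper passes through the $M$--$U$ connection formula and the asymptotics (\ref{HypGeom1}) of $U$; both give the same limits (\ref{limit1})--(\ref{limit2}).
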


\begin{proof}
The general solution of equations (\ref{1})-(\ref{2}) based on Kummer functions are given by the Lemma \ref{Lemma2.1} and Remark \ref{Remark1-2}:
\begin{align}
& \Psi_l(x,t)=t^{\alpha/2}\left[E_l M\left(-\dfrac{\alpha}{2},\dfrac{1}{2},-\eta_l^2\right)+F_l \eta_l M\left(-\dfrac{\alpha}{2}+\dfrac{1}{2},\dfrac{3}{2},-\eta_l^2\right)  \right], \label{27}\\
& \Psi_s(x,t)=t^{\alpha/2}\left[E_s M\left(-\dfrac{\alpha}{2},\dfrac{1}{2},-\eta_s^2\right)+F_s \eta_s M\left(-\dfrac{\alpha}{2}+\dfrac{1}{2},\dfrac{3}{2},-\eta_s^2\right)  \right],
\end{align}
where $\eta_l=\dfrac{x}{2\sqrt{d_lt}}$, $\eta_s=\dfrac{x}{2\sqrt{d_st}}$, and $E_l$, $F_l$, $E_s$ and $F_s$ are coefficients that must be determined.

Furthermore, condition (\ref{4}) together with (\ref{27}) implies that the free boundary should take the following  form:
\begin{equation}
s(t)=2\nu \sqrt{d_l t}. 
\end{equation}
where $\nu$ is a constant that also has to be computed.

Using the derivation formulas for the Kummer functions (\ref{DerivM1})-(\ref{DerivM2}) presented in the Appendix A it is deduced that:
\begin{eqnarray}
{\Psi_l}_x (x,t)&=&\dfrac{t^{(\alpha-1)/2}}{\sqrt{d_l}} \left[ E_l \alpha \eta_l M\left(-\dfrac{\alpha}{2}+1,\dfrac{3}{2},-\eta_l^2 \right)+\right. \nonumber\\
& &+ \left. \dfrac{F_l}{2}M\left(-\dfrac{\alpha}{2}+\dfrac{1}{2},\dfrac{1}{2},-\eta_l^2 \right)\right], \label{DerivLiq}\\
{\Psi_s}_x (x,t)&=&\dfrac{t^{(\alpha-1)/2}}{\sqrt{d_s}} \left[ E_s \alpha \eta_s M\left(-\dfrac{\alpha}{2}+1,\dfrac{3}{2},-\eta_s^2 \right)+\right. \nonumber\\
&&+\left. \dfrac{F_s}{2}M\left(-\dfrac{\alpha}{2}+\dfrac{1}{2},\dfrac{1}{2},-\eta_s^2 \right)\right]. \label{DerivSol}
\end{eqnarray}

From  equation (\ref{4}) we have:
\begin{equation}
 t^{\alpha/2}\left[ E_lM\left(-\dfrac{\alpha}{2},\dfrac{1}{2},-\nu^2 \right)+F_l\nu M\left(-\dfrac{\alpha}{2}+\dfrac{1}{2},\dfrac{3}{2},-\nu^2 \right)\right]=0.
\end{equation}
Isolating $E_l$ we obtain (\ref{20}).

On the other hand, using (\ref{27}) and (\ref{DerivLiq}), condition (\ref{Convect}) becomes
\begin{equation}\label{34}
k_l \dfrac{F_l}{2\sqrt{d_l}}=h_0 \left[ E_l-T_{\infty}\right],
\end{equation}
and replacing $E_l$ given by (\ref{20}) into (\ref{34}) we get that $F_l$ is given by (\ref{21}).

Condition (\ref{4}), $\Psi_s(s(t),t)=0$ imply:

\begin{equation}
E_sM\left(-\dfrac{\alpha}{2},\dfrac{1}{2},-\nu^2\omega^2 \right) +F_s \nu\omega M\left( -\dfrac{\alpha}{2}+\dfrac{1}{2},\dfrac{3}{2},-\nu^2\omega^2\right)=0 \quad \text{ where } \omega=\sqrt{\tfrac{d_l}{d_s}},
\end{equation}
leading us to define $E_s$ by (\ref{22}).

In view of  condition (\ref{tempInit}), it is necessary to compute $\Psi_s(x,0)$, given by the expression:
\begin{eqnarray}
\Psi_s(x,0) &=& \lim\limits_{t\rightarrow 0} \Psi_s(x,t)=E_s \left[ \lim\limits_{t\rightarrow 0}  t^{\alpha/2} M\left(-\tfrac{\alpha}{2},\tfrac{1}{2},-\eta_s^2\right) \right] + \nonumber \\
&           & +F_s \left[ \lim\limits_{t\rightarrow 0}  t^{\alpha/2} \eta_s M\left(-\tfrac{\alpha}{2}+\tfrac{1}{2},\tfrac{3}{2},-\eta_s^2\right) \right] \label{Limit}
\end{eqnarray}

Taking into account formula (\ref{M&U}) from the Appendix A we obtain:
\begin{eqnarray}
M\left( -\dfrac{\alpha}{2},\dfrac{1}{2},-\eta_s^2\right)&=& \left[\dfrac{\sqrt{\pi}}{\Gamma\left( \frac{\alpha}{2}+\frac{1}{2}\right)}e^{-\frac{\alpha}{2}\pi i} U\left(-\dfrac{\alpha}{2},\dfrac{1}{2},-\eta_s^2 \right)+ \right. \nonumber \\
&+& \left. \dfrac{\sqrt{\pi}}{\Gamma\left(- \frac{\alpha}{2}\right)} e^{-\frac{(\alpha+1)}{2}\pi i } e^{-\eta_s^2} U\left(\dfrac{\alpha}{2}+\dfrac{1}{2},\dfrac{1}{2},\eta_s^2 \right)\right].\label{Rel1}
\end{eqnarray}

and
\begin{eqnarray}
&&M\left( -\dfrac{\alpha}{2}+\dfrac{1}{2},\dfrac{3}{2},-\eta_s^2\right)= \left[\dfrac{\sqrt{\pi}}{2\Gamma\left( \frac{\alpha}{2}+1\right)}e^{(-\frac{\alpha}{2}+\frac{1}{2})\pi i} U\left(-\dfrac{\alpha}{2}+\dfrac{1}{2},\dfrac{3}{2},-\eta_s^2 \right)+  \right. \nonumber \\
&&+ \left.\dfrac{\sqrt{\pi}}{2\Gamma\left(- \frac{\alpha}{2}+ \frac{1}{2}\right)} e^{-(\frac{\alpha}{2}+1)\pi i} e^{-\eta_s^2} U\left(\dfrac{\alpha}{2}+1,\dfrac{3}{2},\eta_s^2 \right)\right].\label{Rel2}
\end{eqnarray}

We can observe that if $\alpha$ is a non-negative even integer then $\Gamma\left(-\frac{\alpha}{2} \right)$  
is not defined, and so (\ref{Rel1}) is not valid. In the same way if $\alpha$ is a non-negative odd integer then $\Gamma\left(-\frac{\alpha}{2}+\frac{1}{2} \right)$ is neither defined and (\ref{Rel2}) cannot be applied.  From this fact we restrict $\alpha$ to be positive and non-integer. 

Considering (\ref{Rel1}) and (\ref{Rel2})and applying (\ref{HypGeom1}) we obtain the following limits:
\begin{eqnarray}
\lim\limits_{t\rightarrow 0}\left[  t^{\alpha/2}M\left(-\dfrac{\alpha}{2},\dfrac{1}{2},-\eta_s^2 \right)\right]&=&  \dfrac{\sqrt{\pi}}{\Gamma\left( \frac{\alpha}{2}+\frac{1}{2}\right)} \dfrac{x^{\alpha}}{2^{\alpha} d_s^{\alpha/2}}.\label{limit1}
\end{eqnarray} 
and
\begin{equation}
\lim\limits_{t\rightarrow 0 } t^{\alpha/2} \eta_s M\left(-\dfrac{\alpha}{2}+\dfrac{1}{2},\dfrac{3}{2},-\eta_s^2\right)= \dfrac{\sqrt{\pi}}{\Gamma\left(\frac{\alpha}{2}+1 \right)} \dfrac{x^{\alpha}}{2^{\alpha+1} d_s^{\alpha/2}}, \label{limit2}
\end{equation}
 Combining (\ref{Limit}), (\ref{limit1}) and (\ref{limit2}) we deduce that:
\begin{equation}\label{Psi_s(x,0)}
\Psi_s(x,0)= E_s \dfrac{\sqrt{\pi}}{\Gamma\left( \dfrac{\alpha}{2}+\dfrac{1}{2}\right)}\dfrac{x^{\alpha}}{(4d_s)^{\alpha/2}} +F_s \dfrac{\sqrt{\pi}}{2\Gamma\left(\dfrac{\alpha}{2}+1 \right)}\dfrac{x^{\alpha}}{(4d_s)^{\alpha/2}}
\end{equation}
Considering the initial temperature given by  (\ref{tempInit}), and replacing $E_s$ by (\ref{22}) in (\ref{Psi_s(x,0)}) it is obtained:
\begin{equation}
-\nu\omega \dfrac{M\left( -\dfrac{\alpha}{2}+\dfrac{1}{2},\dfrac{3}{2},-\nu^2\omega^2\right)}{M\left(-\dfrac{\alpha}{2},\dfrac{1}{2},-\nu^2\omega^2 \right)} \dfrac{\sqrt{\pi}}{\Gamma\left(\dfrac{\alpha}{2}+\dfrac{1}{2} \right)}F_s+ \dfrac{\sqrt{\pi}}{2\Gamma\left(\dfrac{\alpha}{2}+1 \right)}F_s=-T_i (4d_s)^{\alpha/2}
\end{equation}
Then we can determine $F_s$ using the definition of the $U$-Kummer function and the identity (\ref{MRelExp1}) presented in Appendix A arriving to definition (\ref{23}).

Until now we have obtained $E_l$, $F_l$, $E_s$ and $F_s$ as functions of $\nu$, arriving to the expressions (\ref{20})-(\ref{23}).

Finally it remains to take into account the Stefan condition (\ref{5}) from which we will deduce an equation that must be satisfied by the unknown coefficient $\nu$ that characterized the free boundary. Substituting equations (\ref{20})-(\ref{23}), (\ref{DerivLiq})-(\ref{DerivSol}) into (\ref{5}) and applying formula (\ref{MRelExp2}) it can be obtained that $\nu$ must satisfy the following equation:
\begin{eqnarray}
&& \dfrac{k_lh_0T_{\infty}}{\left[ k_l M\left(\dfrac{\alpha}{2}+\dfrac{1}{2},\dfrac{1}{2},x^2 \right)  + 2\sqrt{d_l}h_0 x M\left(\dfrac{\alpha}{2}+1,\dfrac{3}{2},x^2 \right)\right]}  + \qquad \nonumber\\
&&\qquad  -\dfrac{k_s T_i 2^{\alpha} d_s^{(\alpha-1)/2}}{U\left(\dfrac{\alpha}{2}+\dfrac{1}{2},\dfrac{1}{2},x^2\omega^2 \right)} =\gamma 2^{\alpha}x^{\alpha+1}d_l^{(\alpha+1)/2}, \qquad x>0. \label{EcuacionNu}
\end{eqnarray}
that can be rewritten, arriving to the result that $\nu$ must be a solution of the equation (\ref{24}).

Our proof is going to be completed by showing that there exists a unique solution $\nu$ for the equation (\ref{EcuacionNu}) (i.e. (\ref{24})). With this purpose we will study the behaviour of the functions $f_1$ and $f_2$.

On one hand, due to the derivation formula (\ref{DerivU}), and its integral representation (\ref{IntRepU}) we can assure that $f_1$ is an increasing function of $x$.  It follows immediately that the first term of the left hand side of equation (\ref{24}) decreases from $\Delta_1= -\dfrac{k_sT_i d_s^{(\alpha-1)/2}}{\gamma d_l^{(\alpha+1)/2}} \dfrac{\Gamma\left( \alpha/2+1\right)}{\sqrt{\pi}}$ to $-\infty$ when $x$ increases from $0$ to $+\infty$.

On the other hand, taking into account  equations (\ref{DerivM1}) and (\ref{DerivM2}) we arrive to the conclusion that $f_2$ is  a decreasing function of $x$. Therefore the second term of the left hand side of equation (\ref{24}) decreases from $\Delta_2=\dfrac{h_0T_{\infty}}{\gamma 2^{\alpha} d_l^{(\alpha+1)/2}}$ to $0$ when $x$ increases from $0$ to $+\infty$.

In consequence we can assure that the left hand side of (\ref{24}) decreases from $\Delta_1+ \Delta_2$ to $-\infty$ when $x$ increases from $0$ to $+\infty$.

As the right hand side of (\ref{24}) is an increasing function of $x$ that goes from $0$ to $+\infty$, we claim that the equation (\ref{24}) has a unique solution if and only if it is satisfied the following condition:
\begin{equation}
\Delta_1+\Delta_2 > 0
\end{equation} 
which is equivalent to (\ref{DesH0}).

\end{proof}

\begin{remark}
An inequality of the type (\ref{DesH0}) in order to obtain an instantaneous phase-change process was given firstly in \cite{Ta3}; see also \cite{Ro}.
\end{remark}

\begin{corollary} If the coefficient $h_0$ satisfies the following inequality:
\begin{equation}
0< h_0 \leq  \dfrac{2^{\alpha} \Gamma\left(\dfrac{\alpha}{2}+1\right) k_s T_id_s^{(\alpha-1)/2}}{ T_{\infty}\sqrt{\pi}} 
\end{equation}
then the free boundary problem (\ref{1})-(\ref{tempInit}) reduce to a classical heat transfer problem for the initial solid phase governed by:

\begin{eqnarray}
& &{\Psi_s}_t(x,t)=d_s {\Psi_s}_{xx}(x,t),\qquad x>0,\quad t>0,\\
& & k_s {\Psi_s}_x(0,t)=h_0 t^{-1/2} \left[ \Psi_s(0,t)-T_{\infty}t^{\alpha/2}\right],  \qquad t>0, \label{CondStefanSinFrontera}\\
& & \Psi_s(x,0)=-T_ix^{\alpha}, \qquad x>0, \label{CondTempInicialSinFrontera} 
\end{eqnarray}
whose explicit solution is given by:
\begin{eqnarray}
\Psi_s(x,t)=t^{\alpha/2}\left[E_s M\left(-\dfrac{\alpha}{2},\dfrac{1}{2},-\eta_s^2 \right) +F_s\eta_s M\left(-\dfrac{\alpha}{2}+\dfrac{1}{2},\dfrac{3}{2},-\eta_s^2 \right)\right],
\end{eqnarray}
where $\eta_s=x/\sqrt{4d_st}$ and :
\begin{eqnarray}
E_s &=&\dfrac{-T_i d_s^{\alpha/2}k_s \Gamma(\alpha+1)+\Gamma\left( \dfrac{\alpha+1}{2}\right)h_0 \sqrt{d_s}T_{\infty}}{\left[k_s \Gamma\left( \dfrac{\alpha}{2}+1\right)+h_0 \sqrt{d_s}\Gamma\left(\dfrac{\alpha+1}{2} \right) \right]}, \\
F_s&=&\dfrac{2 \sqrt{d_s}h_0 (E_s - T_{\infty})}{k_s}.
\end{eqnarray}

\end{corollary}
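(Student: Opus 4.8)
The plan is to solve the reduced single-region problem explicitly and then to verify that, in the stated range of $h_0$, this is genuinely the solution of the original free boundary problem because the phase change is never triggered.

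For the explicit solution I would invoke Lemma 2.1 and Remark 2.3, which give the general self-similar solution of the heat equation on $x>0$,
\[
\Psi_s(x,t)=t^{\alpha/2}\left[E_s\,M\!\left(-\tfrac{\alpha}{2},\tfrac12,-\eta_s^2\right)+F_s\,\eta_s\,M\!\left(-\tfrac{\alpha}{2}+\tfrac12,\tfrac32,-\eta_s^2\right)\right],\qquad \eta_s=\tfrac{x}{2\sqrt{d_s t}},
\]
with constants $E_s,F_s$ to be fixed by the two remaining data. Evaluating at $\eta_s=0$, using $M(a,b,0)=1$ and the derivative formula (\ref{DerivSol}), the convective condition (\ref{CondStefanSinFrontera}) collapses --- after cancelling the common power of $t$ --- to the linear relation $k_sF_s/(2\sqrt{d_s})=h_0(E_s-T_\infty)$, which is the claimed formula for $F_s$. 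Letting $t\to0$ in $\Psi_s(x,t)$ exactly as in the computation of $\Psi_s(x,0)$ carried out in the proof of Theorem 2.4 (cf. (\ref{limit1}), (\ref{limit2}) and (\ref{Psi_s(x,0)})), the initial condition (\ref{CondTempInicialSinFrontera}) yields a second linear relation between $E_s$ and $F_s$. Solving the resulting $2\times2$ linear system and simplifying via the Legendre duplication formula $\Gamma\!\left(\tfrac{\alpha+1}{2}\right)\Gamma\!\left(\tfrac{\alpha}{2}+1\right)=2^{-\alpha}\sqrt{\pi}\,\Gamma(\alpha+1)$ produces exactly the stated expressions for $E_s$ and $F_s$; this part is routine.

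The substantive step is to justify that this $\Psi_s$ really solves (\ref{1})--(\ref{tempInit}) in this regime, i.e. that no instantaneous fusion occurs, that $s(t)\equiv0$, and that the liquid region is empty. For this I would show that the solid never reaches the phase-change temperature, $\Psi_s(x,t)\le0$ for all $x>0$, $t>0$. By a maximum principle on the semi-infinite domain (with the natural polynomial growth control) it suffices to control $\Psi_s$ on the parabolic boundary: $\Psi_s(x,0)=-T_ix^\alpha<0$, the profile is negative for large $x$ since it matches $-T_ix^\alpha$ at leading order, and $\Psi_s(0,t)=t^{\alpha/2}E_s$ --- so everything reduces to the sign of $E_s$. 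Since the denominator in the formula for $E_s$ is positive, $E_s\le0$ is equivalent to $\Gamma\!\left(\tfrac{\alpha+1}{2}\right)h_0\sqrt{d_s}\,T_\infty\le T_i d_s^{\alpha/2}k_s\Gamma(\alpha+1)$, and one further application of the duplication formula turns this into $h_0\le 2^{\alpha}\Gamma\!\left(\tfrac{\alpha}{2}+1\right)k_sT_i d_s^{(\alpha-1)/2}/(T_\infty\sqrt{\pi})$, precisely the hypothesis of the corollary, which is the complement of (\ref{DesH0}). Hence in this range $E_s\le0$, the surface stays at or below the melting temperature, fusion is never initiated, and the above $\Psi_s$ is the solution of the reduced heat problem --- consistent with the fact, seen in the proof of Theorem 2.4, that equation (\ref{24}) has no positive root when $\Delta_1+\Delta_2\le0$.

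I expect the main obstacle to be this last point: matching the sign change of $E_s$ with the threshold in (\ref{DesH0}), and arguing cleanly --- either by the maximum principle or directly from the sign structure of the Kummer representation --- that $\Psi_s\le0$ on the whole quarter-plane, so that the free boundary genuinely does not form. Once the Legendre duplication formula is invoked, the algebraic identifications are straightforward.
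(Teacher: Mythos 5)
Your derivation of $E_s$ and $F_s$ is correct and coincides with what the paper does: the paper's proof of this corollary consists only of writing the general similarity solution from Lemma \ref{Lemma2.1} and Remark \ref{Remark1-2} and stating that the coefficients are obtained ``in an analogous way'' to Theorem \ref{TeoPcipal}, which is exactly the two linear relations you set up (the Robin condition at $\eta_s=0$ giving $k_sF_s/(2\sqrt{d_s})=h_0(E_s-T_\infty)$, and the $t\to 0$ limit via (\ref{limit1})--(\ref{limit2}) combined with the Legendre duplication formula giving the second equation). Where you genuinely go beyond the paper is the second half: the paper never argues that the free boundary fails to form, it merely relies implicitly on the fact, established at the end of the proof of Theorem \ref{TeoPcipal}, that equation (\ref{24}) has a positive root if and only if $\Delta_1+\Delta_2>0$, i.e.\ if and only if (\ref{DesH0}) holds. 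Your observation that the threshold in (\ref{DesH0}) is exactly the sign change of $E_s$ (so that $\Psi_s(0,t)=t^{\alpha/2}E_s\le 0$ precisely in the complementary range), together with a Phragm\'en--Lindel\"of type maximum principle on the quarter-plane to conclude $\Psi_s\le 0$ everywhere and hence that fusion is never initiated, is a legitimate and more physically complete justification of the word ``reduce'' in the statement; the computation checks out ($E_s\le 0$ is equivalent, after one application of the duplication formula, to the stated bound on $h_0$, and the denominator of $E_s$ is manifestly positive). What the paper's route buys is brevity and uniformity with Theorem \ref{TeoPcipal}; what yours buys is an actual proof that the reduced problem is the correct limit object, which the paper leaves unsaid.
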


\begin{proof}
From Lemma \ref{Lemma2.1} and Remark \ref{Remark1-2} we have that the temperature is given by:
\begin{equation}
\Psi_s(x,t)=t^{\alpha/2} \left[E_s M\left(-\dfrac{\alpha}{2},\dfrac{1}{2}, -\eta_s^2 \right)+F_s \eta_s M\left( -\dfrac{\alpha}{2}+\dfrac{1}{2},\dfrac{3}{2}.-\eta_s^2\right) \right] 
\end{equation}
where $\eta_s=\dfrac{x}{\sqrt{4d_st}}$ and $E_s$ and $F_s$ are coefficients that must be determined.

Taking into account conditions (\ref{CondStefanSinFrontera})-(\ref{CondTempInicialSinFrontera}), coefficients $E_s$ and $F_s$  are obtained in an  analogous way as in the proof of the Theorem \ref{TeoPcipal}.

\end{proof}

\subsection{Case when $\alpha$ is a non-negative integer}

This section is intended to present the exact solution of the problem (\ref{1})-(\ref{tempInit}) in the particular case that $\alpha$ is a non-negative integer. Using formulas (\ref{AlphaInt1})-(\ref{AlphaInt2}) from Appendix A it can be  proved the following assertion.

\begin{lemma}
Consider the problem (\ref{1})-(\ref{tempInit}), where $\alpha=n\in \mathbb{N}_0$. If the coefficient $h_0$ satisfies the inequality:
\begin{equation}
h_0>\dfrac{2^n \Gamma\left(\dfrac{n}{2}+1 \right)k_s T_i d_s^{(n-1)/2}}{ T_{\infty}\sqrt{\pi}} \label{h0-Nat}
\end{equation}
then the explicit solution of this problem is given by:
\begin{align}
&s(t) =2\nu \sqrt{d_l t}, \label{Front-Nat}\\
&\Psi_l(x,t)= - \dfrac{t^{n/2}2^{n}h_0 T_{\infty} \sqrt{d_l} \Gamma\left(\dfrac{n}{2}+\dfrac{1}{2} \right) \Gamma\left(\dfrac{n}{2}+1 \right) \left[F_n(\eta_l)E_n(\nu)-F_n(\nu)E_n(\eta_l) \right]}{\left[k_l\Gamma\left(\dfrac{n}{2}+1 \right) E_n(\nu)+ \sqrt{d_l}h_0 \Gamma\left(\dfrac{n}{2}+\dfrac{1}{2} \right)F_n(\nu) \right]},\\
&\Psi_s(x,t)=t^{n/2} 2^n T_i d_s^{n/2}\Gamma(n+1) \left[ \dfrac{E_n(\eta_s)F_n(\nu\omega)-E_n(\nu\omega)F_n(\eta_s)}{E_n(\nu\omega)-F_n(\nu\omega)}\right],\label{TempSol-Nat}
\end{align}
where $\eta_l=\dfrac{x}{2\sqrt{d_lt}}$, $\eta_s=\dfrac{x}{2\sqrt{d_st}}$, $\omega=\sqrt{\dfrac{d_l}{d_s}}$ and $\nu$ is the unique solution of the following equation:
\begin{eqnarray}
&&\dfrac{h_0T_{\infty}}{\gamma 2^n d_l^{(n+1)/2}} \dfrac{1}{\left[e^{x^2} 2^n \Gamma\left(\dfrac{n}{2}+1 \right)E_n(x)+\dfrac{2^n \sqrt{d_l} h_0}{k_l}e^{x^2}\Gamma\left( \dfrac{n}{2}+\dfrac{1}{2}\right)F_n(x)\right]}+  \nonumber \\
&& -\dfrac{k_sT_i d_s^{(n-1)/2}}{\gamma d_l^{(n+1)/2}} \dfrac{1}{2^n e^{x^2\omega^2}\sqrt{\pi} \left(E_n(x\omega)-F_n(x\omega) \right)}  = x^{n+1}, \quad\quad  x>0 . \label{EcNu-Nat}
\end{eqnarray}
\end{lemma}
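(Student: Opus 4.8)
The plan is to follow the same route as in the proof of Theorem~\ref{TeoPcipal}, replacing the Kummer representation of the liquid and solid temperatures by its specialization for an integer parameter. By Lemma~\ref{Lemma2.1} and Remark~\ref{Remark1-2}, writing $\alpha=n$, the temperatures must be of the form $\Psi_l(x,t)=t^{n/2}\bigl[E_l M(-\tfrac n2,\tfrac12,-\eta_l^2)+F_l\eta_l M(-\tfrac n2+\tfrac12,\tfrac32,-\eta_l^2)\bigr]$ and analogously for $\Psi_s$ with $\eta_s$; using formulas (\ref{AlphaInt1})--(\ref{AlphaInt2}) from Appendix~A these combinations are re-expressed through the functions $E_n$ and $F_n$, which is exactly the step that repairs the breakdown of the limit computation used in the non-integer proof (there the relation (\ref{M&U}) requires $\Gamma(-\tfrac n2)$ or $\Gamma(-\tfrac n2+\tfrac12)$, which is undefined for integer $n$). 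Condition (\ref{4}) at the interface forces $s(t)=2\nu\sqrt{d_lt}$, hence (\ref{Front-Nat}), and gives one linear relation between $E_l$ and $F_l$; inserting the $x$-derivative of $\Psi_l$ into the convective condition (\ref{Convect}) gives a second one, so $E_l$ and $F_l$ are determined as explicit multiples of $h_0T_\infty$ depending on $\nu$. The vanishing of $\Psi_s$ at $x=s(t)$ ties $E_s$ to $F_s$.

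Next, the initial condition (\ref{tempInit}) is handled by computing $\Psi_s(x,0)=\lim_{t\to0^{+}}\Psi_s(x,t)$: since $\eta_s\to+\infty$ as $t\to0^{+}$, one needs the behaviour of $E_n$ and $F_n$ at infinity (one of them grows like a polynomial of degree $n$, the other decays exponentially), from which the $t^{n/2}$-weighted terms have finite nonzero limits proportional to $x^{n}$; matching these with $-T_ix^{n}$ fixes $F_s$, and then $E_s$, as explicit functions of $\nu$, producing (\ref{TempSol-Nat}). At this stage all four coefficients $E_l,F_l,E_s,F_s$ are known in terms of $\nu$, and they are real because the integer formulas (\ref{AlphaInt1})--(\ref{AlphaInt2}) involve only real elementary functions.

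Finally, substituting $\Psi_{l,x}(s(t),t)$, $\Psi_{s,x}(s(t),t)$ and the determined coefficients into the Stefan condition (\ref{5}) collapses, after simplification with the derivative identities for $E_n$ and $F_n$, to a single transcendental equation for $\nu$, namely (\ref{EcNu-Nat}). Existence and uniqueness of a positive root is then obtained by the same monotonicity argument as for (\ref{24}): the right-hand side $x^{n+1}$ increases from $0$ to $+\infty$, while each of the two terms on the left-hand side is shown, via the derivative/integral representations of $E_n$ and $F_n$, to be decreasing in $x$, the first one going from a finite value $\Delta_1<0$ to $-\infty$ and the second one from a finite value $\Delta_2>0$ to $0$; hence the left-hand side decreases from $\Delta_1+\Delta_2$ to $-\infty$, so a unique intersection exists precisely when $\Delta_1+\Delta_2>0$, which one checks is equivalent to (\ref{h0-Nat}).

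The main obstacle I expect is this last step: establishing the strict monotonicity in $x$ of the two quotients built from $e^{x^{2}\omega^{2}}\bigl(E_n(x\omega)-F_n(x\omega)\bigr)$ and from $e^{x^{2}}\bigl(\Gamma(\tfrac n2+1)E_n(x)+\tfrac{\sqrt{d_l}h_0}{k_l}\Gamma(\tfrac n2+\tfrac12)F_n(x)\bigr)$, together with identifying the limit values $\Delta_1,\Delta_2$ at $x=0$; this is where the explicit elementary expressions for $E_n$, $F_n$ and their derivatives must be used carefully, and where the parities of $n$ may have to be treated separately. An alternative would be to pass to the limit $\alpha\to n$ in Theorem~\ref{TeoPcipal}, but the $\Gamma(-\alpha/2)$-type singularities that occur there make the direct verification through the integer formulas the cleaner option.
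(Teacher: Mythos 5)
Your proposal follows essentially the same route as the paper: the paper's own proof simply repeats the argument of Theorem~\ref{TeoPcipal} with the Kummer functions rewritten through (\ref{AlphaInt1})--(\ref{AlphaInt2}), and re-establishes the limits (\ref{limit1})--(\ref{limit2}) (the step that breaks down for integer $\alpha$, exactly as you identify) by invoking Tao's formula $\lim_{t\to0}t^{n/2}E_n(\eta_s)=\lim_{t\to0}t^{n/2}F_n(\eta_s)=x^n/\bigl(\Gamma(n+1)2^n d_s^{n/2}\bigr)$ together with the Legendre duplication formula. One small correction to your parenthetical: it is not the case that one of $E_n$, $F_n$ grows polynomially while the other decays exponentially --- both grow like $z^n$ with the same leading term coming from $i^n\mathrm{erfc}(-z)$, and it is their \emph{difference} $E_n-F_n=i^n\mathrm{erfc}(z)$ that decays like $e^{-z^2}$; this is precisely why both $t^{n/2}$-weighted limits are finite, nonzero and equal, which is what your subsequent matching with $-T_ix^n$ correctly uses.
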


\begin{proof}
Inequality (\ref{h0-Nat}), functions (\ref{Front-Nat})-(\ref{TempSol-Nat}) and equation (\ref{EcNu-Nat}) can be deduced following the same reasoning used in the demonstration of the Theorem \ref{TeoPcipal}  by using  the relationship between the Kummer functions and the family of the repeated integrals of the complementary error function given by (\ref{AlphaInt1}) and (\ref{AlphaInt2}).

Let us note that in order to follow the arguments of Theorem \ref{TeoPcipal} we must show that the limits given by (\ref{limit1}) and (\ref{limit2}) remain true in case that $\alpha$ is a non-negative integer. But this can be easily proved due to the formula presented by Tao in \cite{Tao}:
\begin{equation}
\lim\limits_{t\rightarrow 0}\text{ } t^{n/2} E_n\left(\eta_s \right)=\lim\limits_{t\rightarrow 0}\text{ } t^{n/2} F_n\left(\eta_s \right)= \dfrac{x^n}{\Gamma\left( n+1\right) 2^n d_s^{n/2}}.
\end{equation}
and due to the Legendre duplication formula for the Gamma function \cite{Ab}:
\begin{equation}
\Gamma(x)\Gamma\left(x+\frac{1}{2}\right)=\dfrac{\sqrt{\pi}}{2^{2x-1}}\Gamma(2x).
\end{equation}

\end{proof}

\begin{remark} \label{ObsN=0}
Considering $n=0$ and taking into account that $E_0(z)=1$ and $F_0(z)=erf(z)$, condition (\ref{h0-Nat}) and functions (\ref{Front-Nat})-(\ref{TempSol-Nat}) reduce to: 
\begin{align}
& h_0>\dfrac{k_s T_i}{T_{\infty}\sqrt{\pi d_s}}\\
&s(t) =2\nu \sqrt{d_l t},\\
&\Psi_l(x,t)= \dfrac{h_0 T_{\infty}\sqrt{\pi d_l}}{k_l}\dfrac{\left[erf(\nu)-erf\left( \dfrac{x}{2\sqrt{d_l t}}\right) \right]}{\left[1+\dfrac{\sqrt{\pi d_l} h_0}{k_l}erf(\nu) \right]},\\
&\Psi_s(x,t)= -T_i\left[ 1-\dfrac{erfc\left( \dfrac{x}{2\sqrt{d_st}}\right)}{erf(\nu \omega)}\right],
\end{align}
where $\nu$ is the unique solution of the following equation:
\begin{equation}
-\dfrac{k_s T_i}{\gamma\sqrt{\pi d_l d_s}} \dfrac{e^{-x^2\omega^2}}{erfc(x\omega)}+\dfrac{h_0T_{\infty}}{\gamma \sqrt{d_l}} \dfrac{e^{-x^2}}{\left[1+\dfrac{\sqrt{\pi d_l} h_0 erf(x)}{k_l} \right]}=x, \quad\quad x>0.
\end{equation}

This formulas are in agreement with the explicit solution of the problem presented by Tarzia in \cite{Ta1} which in contrast with our problem corresponds to a solidification process. 
\end{remark}

\begin{remark}
The results of Remark \ref{ObsN=0} in the one-phase case with a convective boundary condition are also recovered  in \cite{BoTa}.
\end{remark}

\section{Limit behaviour when $h_0\rightarrow +\infty$}

In this section we are going to study the limit behaviour of the solution of the problem governed by equation (\ref{1})-(\ref{tempInit}) when the coefficient $h_0$ that characterizes the heat transfer in the convective condition (\ref{Convect}) tends to infinity.  The main reason for doing this analysis is due to the fact that the convective heat input:
\begin{equation}
k_l{\Psi_{l}}_x(0,t)=h_0 t^{-1/2}[\Psi_l(0,t)-T_{\infty}t^{\alpha/2}], \label{Convectiva}
\end{equation}
 constitutes a generalization of the Dirichlet condition in the sense that if we take the limit when $h_0\rightarrow\infty$ in (\ref{Convectiva}) we must obtain $\Psi_l(0,t)=T_{\infty}t^{\alpha/2}$. Therefore, we will prove that the solution to our problem in which we consider a convective condition at the fixed face $x=0$,  converges to the solution of a problem with a temperature condition at the fixed face.

Bearing in mind that the solution to the problem (\ref{1})-(\ref{tempInit}), it means the free boundary and the temperatures in the solid and the liquid phase depends on $h_0$, we will rename them as: \newline

$
\left\lbrace
\begin{array}{lcl}
s_{h_0}(t) &:& \text{free boundary given by (\ref{s}), } \\
\nu_{h_0} &:& \text{unique solution of the equation (\ref{24}), } \\
\Psi_{l h_0}(t) &:& \text{liquid temperature given by (\ref{PsiLiq}), } \\
\Psi_{s h_0}(t) &:& \text{liquid temperature given by (\ref{PsiSol}). } 
\end{array}
\right.
$
\begin{thm}
Let us consider the problem given by conditions (\ref{1})-(\ref{tempInit}), where the solution  $s_{h_0}$, ${\Psi_l}_{h_0}$, ${\Psi_s}_{h_0}$ and $\nu_{h_0}$   is defined by (\ref{s}), (\ref{PsiLiq}), (\ref{PsiSol}) and (\ref{24}) respectively. If we take the limit when $h_0\rightarrow\infty$ we obtain that  $s_{h_0}$, ${\Psi_l}_{h_0}$, ${\Psi_s}_{h_0}$ and $\nu_{h_0}$  converge to  $s_{\infty}$, ${\Psi_{l\infty}}$, ${\Psi_{s\infty}}$ and $\nu_{\infty}$ respectively,  which corresponds to the solution of the following problem:
\begin{align}
& {\Psi_{l \infty }}_t(x,t)=d_l {\Psi_{l \infty }}_{xx}(x,t), \qquad 0<x<s_{\infty}(t), \quad t>0,& \label{1Infty}\\
& {\Psi_{s \infty }}_t(x,t)=d_s {\Psi_{s \infty }}_{xx}(x,t),\qquad x>s_{\infty}(t),\quad t>0,\label{2Infty}&\\
&  s_{\infty}(0)=0,\label{3Infty}&\\
&  \Psi_{l \infty }(s_{\infty}(t),t)=\Psi_{s \infty }(s_{\infty}(t),t)=0, \qquad t>0, \label{4Infty}&\\
& k_s {\Psi_{s \infty }}_x(s_{\infty}(t),t)-k_l {\Psi_{l \infty }}_x (s_{\infty}(t),t)=\gamma s_{\infty}(t)^{\alpha} \dot s_{\infty}(t), \qquad t>0,& \label{5Infty}\\
&  \Psi_{l \infty }(0,t)=T_{\infty} t^{\alpha/2}  \quad t>0, \label{TempInfty}&\\
 & \Psi_{s \infty }(x,0)=-T_ix^{\alpha} , \qquad x>0& \label{tempInitInfty}.
\end{align}
with $s_\infty=2\nu_{\infty }\sqrt{d_l t}$ and where a temperature $T_{\infty} t^{\alpha/2}$ is prescribed at the fixed face $x=0$. 
\end{thm}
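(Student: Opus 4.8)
The plan is to reduce the whole statement to the convergence of the single scalar $\nu_{h_0}$, because $s_{h_0}(t)=2\nu_{h_0}\sqrt{d_l t}$ and, through (\ref{20})--(\ref{23}), the coefficients $E_{lh_0},F_{lh_0},E_{sh_0},F_{sh_0}$ entering (\ref{PsiLiq})--(\ref{PsiSol}) are continuous functions of $\nu_{h_0}$ (and, for the liquid pair, of $h_0$ too). First I would examine the $h_0$-dependence of equation (\ref{24}): only the second term on its left depends on $h_0$, through $f_2$, and writing
\[
\frac{h_0T_{\infty}}{\gamma 2^{\alpha}d_l^{(\alpha+1)/2}}\,f_2(x)=\frac{T_{\infty}}{\gamma 2^{\alpha}d_l^{(\alpha+1)/2}}\cdot\frac{1}{\frac{k_l}{h_0}M\!\left(\frac{\alpha}{2}+\frac{1}{2},\frac{1}{2},x^2\right)+\frac{2\sqrt{d_l}}{k_l}\,x\,M\!\left(\frac{\alpha}{2}+1,\frac{3}{2},x^2\right)}
\]
makes it clear that for every fixed $x>0$ the left-hand side of (\ref{24}) is strictly increasing in $h_0$ and converges, as $h_0\to+\infty$, to
\[
G_{\infty}(x):=-\frac{k_sT_id_s^{(\alpha-1)/2}}{\gamma d_l^{(\alpha+1)/2}}\cdot\frac{1}{U\!\left(\frac{\alpha}{2}+\frac{1}{2},\frac{1}{2},x^2\omega^2\right)}+\frac{k_lT_{\infty}}{\gamma 2^{\alpha+1}d_l^{(\alpha+1)/2}\sqrt{d_l}}\cdot\frac{1}{x\,M\!\left(\frac{\alpha}{2}+1,\frac{3}{2},x^2\right)}.
\]
Then I would re-run the monotonicity analysis of Theorem \ref{TeoPcipal} for $G_{\infty}$: it is still strictly decreasing on $(0,+\infty)$, now from $+\infty$ (the analogue of the restriction (\ref{DesH0}) becomes automatic, the second term blowing up like $1/x$ as $x\to0^{+}$) to $-\infty$, while $x^{\alpha+1}$ increases from $0$ to $+\infty$; hence the limit equation $G_{\infty}(x)=x^{\alpha+1}$ has a unique positive root, which I call $\nu_{\infty}$, and $s_{\infty}(t)=2\nu_{\infty}\sqrt{d_l t}$.

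Next I would prove $\nu_{h_0}\to\nu_{\infty}$. Since the left-hand side of (\ref{24}) increases in $h_0$ towards $G_{\infty}$ while the right-hand side $x^{\alpha+1}$ is fixed, the root $\nu_{h_0}$ increases with $h_0$ and stays bounded above by $\nu_{\infty}$; thus $\nu_{h_0}\uparrow\nu^{\ast}$ for some $\nu^{\ast}\le\nu_{\infty}$. Letting $h_0\to+\infty$ in (\ref{24}) taken at $x=\nu_{h_0}$ --- the term $h_0f_2(\nu_{h_0})$ converges thanks to the displayed rewriting, the boundedness of $\nu_{h_0}$, and continuity of the Kummer functions --- gives $G_{\infty}(\nu^{\ast})=(\nu^{\ast})^{\alpha+1}$, so $\nu^{\ast}=\nu_{\infty}$ by uniqueness. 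Identifying $G_{\infty}$ correctly, re-deriving its monotonicity, and justifying this interchange of $h_0\to\infty$ with $x\to\nu_{h_0}$ inside $f_2$ is the step I expect to be the main obstacle; the rest is continuity.

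Given $\nu_{h_0}\to\nu_{\infty}$, in the solid phase the coefficients $E_{sh_0},F_{sh_0}$ of (\ref{22})--(\ref{23}) depend continuously on $\nu$ alone, so they converge to $E_{s\infty},F_{s\infty}$ and $\Psi_{sh_0}\to\Psi_{s\infty}$ together with its $x$-derivative, uniformly on compact subsets of the region $x\ge s_{\infty}(t)$, where $\Psi_{s\infty}$ has the form (\ref{PsiSol}) with $\nu=\nu_{\infty}$. In the liquid phase (\ref{21}) yields $F_{lh_0}\to F_{l\infty}:=\frac{-T_{\infty}M(-\frac{\alpha}{2},\frac{1}{2},-\nu_{\infty}^{2})}{\nu_{\infty}M(-\frac{\alpha}{2}+\frac{1}{2},\frac{3}{2},-\nu_{\infty}^{2})}$, and then (\ref{20}) gives $E_{lh_0}\to T_{\infty}$; consequently $\Psi_{lh_0}\to\Psi_{l\infty}$ (and its $x$-derivative, via (\ref{DerivM1})--(\ref{DerivM2})) uniformly on compact subsets of the region $0<x<s_{\infty}(t)$, the liquid domains $0<x<s_{h_0}(t)$ being nested and exhausting it since $\nu_{h_0}<\nu_{\infty}$.

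Finally I would check that $s_{\infty},\Psi_{l\infty},\Psi_{s\infty}$ solve (\ref{1Infty})--(\ref{tempInitInfty}). Equations (\ref{1Infty})--(\ref{2Infty}) hold because $\Psi_{l\infty},\Psi_{s\infty}$ are again of the similarity form of Lemma \ref{Lemma2.1} and Remark \ref{Remark1-2}; the conditions at the free boundary (\ref{3Infty})--(\ref{5Infty}) pass to the limit since they hold for every $h_0$ and every quantity involved converges; the initial condition (\ref{tempInitInfty}) is recovered exactly as in (\ref{Limit})--(\ref{Psi_s(x,0)}) with the limiting coefficients, which by construction reproduce $-T_ix^{\alpha}$; and the Dirichlet condition (\ref{TempInfty}) follows either from $\Psi_{l\infty}(0,t)=t^{\alpha/2}E_{l\infty}=T_{\infty}t^{\alpha/2}$, or, more tellingly, by letting $h_0\to+\infty$ in the convective condition (\ref{Convectiva}): its left-hand side converges to the finite number $k_l{\Psi_{l\infty}}_x(0,t)$, so $t^{-1/2}h_0[\Psi_{lh_0}(0,t)-T_{\infty}t^{\alpha/2}]$ stays bounded, forcing $\Psi_{lh_0}(0,t)\to T_{\infty}t^{\alpha/2}$. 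For non-integer $\alpha$ the monotonicity of $G_{\infty}$ also yields uniqueness of the solution of (\ref{1Infty})--(\ref{tempInitInfty}), so the limit is precisely the free boundary problem studied in \cite{ZhShZh2017}.
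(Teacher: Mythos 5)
Your proposal is correct and follows the same backbone as the paper's proof: both identify the limit of the left-hand side of (\ref{24}) as $h_0\to+\infty$ with the left-hand side of the Dirichlet-problem equation (\ref{NuInfTemp}), both use the monotonicity of $f_1$ and $f_3$ to get a unique positive root $\nu_\infty$ of the limit equation, and both then transfer the convergence to the temperatures through the explicit formulas for the coefficients. The differences are in organization and rigour. The paper first solves the Dirichlet problem (\ref{1Infty})--(\ref{tempInitInfty}) from scratch by re-running the argument of Theorem \ref{TeoPcipal}, and only afterwards passes to the limit in (\ref{24}); you go in the opposite direction, constructing $\nu_\infty$, $\Psi_{l\infty}$, $\Psi_{s\infty}$ as limits and then verifying a posteriori that they solve (\ref{1Infty})--(\ref{tempInitInfty}), with the Dirichlet condition (\ref{TempInfty}) extracted from the convective condition rather than imposed. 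More substantively, you supply a step the paper glosses over: the paper asserts that $\lim_{h_0\to+\infty}\nu_{h_0}$ ``must be a solution'' of (\ref{NuInfTemp}) without first establishing that this limit exists. Your observation that $h_0 f_2(x)$ is strictly increasing in $h_0$ for each fixed $x>0$, so that $\nu_{h_0}$ increases with $h_0$ and is bounded above by $\nu_\infty$, together with the locally uniform convergence needed to evaluate the equation along $x=\nu_{h_0}$, closes that gap; your remark that the analogue of the restriction (\ref{DesH0}) disappears in the limit (because $f_3(0^+)=+\infty$) also matches the paper. The one blemish is the stray factor $k_l/h_0$ (rather than $1/h_0$) in your rewriting of $h_0f_2$, which is a typo and does not affect the limit.
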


\begin{proof}
On one hand if we consider the problem governed by the equations (\ref{1Infty})-(\ref{tempInitInfty}) we can obtain by following similar arguments of the proof of Theorem \ref{TeoPcipal} that the solution is given by:
\begin{eqnarray}
&&s_{\infty}(t)= 2\nu_{\infty}\sqrt{d_l t}, \\
&&\Psi_{l\infty}(x,t)= t^{\alpha/2}\left[ E_{l\infty} M\left( -\tfrac{\alpha}{2},\tfrac{1}{2},-\nu_{\infty}^2\right) +F_{l\infty} \nu_{\infty} M\left( -\tfrac{\alpha}{2}+\tfrac{1}{2},\tfrac{3}{2},-\nu_{\infty}^2\right)\right],\qquad \   \\
&&\Psi_{s\infty}(x,t)= t^{\alpha/2}\left[ E_{s\infty} M\left( -\tfrac{\alpha}{2},\tfrac{1}{2},-\nu_{\infty}^2\right) +F_{s\infty} \nu_{\infty} M\left( -\tfrac{\alpha}{2}+\tfrac{1}{2},\tfrac{3}{2},-\nu_{\infty}^2\right)\right], 
\end{eqnarray}
where
\begin{eqnarray}
E_{l \infty}&=& T_{\infty}, \\
F_{l \infty}&=& -\dfrac{T_{\infty} M\left( -\dfrac{\alpha}{2},\dfrac{1}{2},-\nu_{\infty}^2\right)}{\nu_{\infty} M\left( -\dfrac{\alpha}{2}+\dfrac{1}{2},\dfrac{3}{2},-\nu_{\infty}^2\right)},\quad \\
E_{s \infty}&=& -\dfrac{\nu_{\infty} \omega M\left( -\dfrac{\alpha}{2}+\dfrac{1}{2},\dfrac{3}{2},-\nu_{\infty}^2 \omega^2\right)}{M\left( -\dfrac{\alpha}{2},\dfrac{1}{2},-\nu_{\infty}^2\omega^2\right)}F_{s\infty},\\
F_{s \infty}&=& -\dfrac{T_i 2^{\alpha+1} d_s^{\alpha/2} M\left(\dfrac{\alpha}{2}+\dfrac{1}{2},\dfrac{1}{2},\nu_{\infty}^2\omega^2 \right)}{U\left(\dfrac{\alpha}{2}+\dfrac{1}{2},\dfrac{1}{2},\nu_{\infty}^2 \omega^2\right)},
\end{eqnarray}
with $\omega=\sqrt{\dfrac{d_l}{d_s}}$ and where $\nu_{\infty}$ is the unique solution of equation:
\begin{equation}\label{NuInfTemp}
\dfrac{k_l T_{\infty}}{2^{\alpha+1} d_l^{(\alpha/2+1)}\gamma} f_3(x)-\dfrac{k_s T_i d_s^{(\alpha-1)/2}}{\gamma  d_l^{(\alpha+1)/2}}f_1(x)=x^{\alpha+1}, \qquad x>0, 
\end{equation}
with
\begin{equation}
f_3(x)=\dfrac{1}{xM\left( \dfrac{\alpha}{2}+1,\dfrac{3}{2},x^2\right)} , \qquad x>0
\end{equation}
and $f_1(x)$ defined in (\ref{f1}).

The proof that $\nu_{\infty}$ is the unique solution of (\ref{NuInfTemp}) derive from analysing the growth of functions $f_1$ and $f_3$. On one hand, we have seen in the proof of Theorem (\ref{TeoPcipal}) that $f_1$ is an increasing function that satisfies $f_1(0)=\tfrac{\Gamma\left(\alpha/2+1 \right)}{\sqrt{\pi}}$ and $f_1(+\infty)=+\infty$. On the other hand, taking into account the derivation formula (\ref{DerivM1}) we can easily prove that $f_3(x)$ is a decreasing function that verifies $f_3(0)=+\infty$ and $f_3(+\infty)=0$. Thus we obtain that the left hand side of equation (\ref{NuInfTemp}) is a decreasing function that goes from $+\infty$ to $-\infty$ when $x$ goes from 0 to $+\infty$. As the right hand side of equation (\ref{NuInfTemp}) is an increasing function that increases from 0 to $+\infty$, we can assure that (\ref{NuInfTemp}) has a unique positive solution.  We remark here that the solution of the problem (\ref{1Infty})-(\ref{tempInitInfty}) was obtained in \cite{ZhShZh2017} by using results for a heat flux condition from an argument not so clear for us, and for this reason we have proved it with details.

Once we have calculated the solution of the problem (\ref{1Infty})-(\ref{tempInitInfty}), let us show that the solution of the problem (\ref{1})-(\ref{tempInit}) converges to it when $h_0\rightarrow +\infty$.
We know that $\nu_{h_0}$, which is the parameter that characterizes the free front in (\ref{1})-(\ref{tempInit}), is the unique solution of (\ref{24}). Taking limit in (\ref{24}) we obtain:

\begin{eqnarray}
&& \lim\limits_{h_0 \rightarrow +\infty}  \left[ -\tfrac{k_s T_i d_s^{(\alpha-1)/2}}{\gamma d_l^{(\alpha+1)/2}} \tfrac{1}{U\left( \tfrac{\alpha}{2}+\tfrac{1}{2},\tfrac{1}{2},x^2\omega^2\right)} \right] + \nonumber \\
&&+\lim\limits_{h_0 \rightarrow +\infty}\left[  \tfrac{h_0 T_{\infty}}{\gamma 2^{\alpha} d_l^{(\alpha+1)/2}} \tfrac{1}{\left[ M\left(\tfrac{\alpha}{2}+\tfrac{1}{2},\tfrac{1}{2},x^2\right)+2\tfrac{\sqrt{d_l}h_0}{k_l}x M\left(\tfrac{\alpha}{2}+1,\tfrac{3}{2},x^2\right)\right]} \right]= \nonumber \\ 
&&= -\dfrac{k_s T_i d_s^{(\alpha-1)/2}}{\gamma d_l^{(\alpha+1)/2}} \tfrac{1}{U\left( \tfrac{\alpha}{2}+\tfrac{1}{2},\tfrac{1}{2},x^2\omega^2\right)}+\tfrac{k_l T_{\infty}}{\gamma 2^{\alpha+1} d_l^{(\alpha/2+1)}} \tfrac{1}{x M\left(\tfrac{\alpha}{2}+1,\tfrac{3}{2},x^2\right)}= \nonumber \\
&&= -\dfrac{k_s T_i d_s^{(\alpha-1)/2}}{\gamma d_l^{(\alpha+1)/2}} f_1(x)+ \dfrac{k_l T_{\infty}}{\gamma 2^{\alpha+1} d_l^{(\alpha/2+1)}} f_3(x). \label{EcuacionNuInfinito}
\end{eqnarray}

That means that $\lim\limits_{h_0\rightarrow +\infty} \nu_{h_0}$ must be a solution of equation  (\ref{NuInfTemp}) which has a unique solution $\nu_{\infty}$, so we can conclude that $\lim\limits_{h_0\rightarrow +\infty} \nu_{h_0}=\nu_{\infty}$.

Subsequently  by simple algebraic calculations we obtain:
\begin{eqnarray}
&& \lim\limits_{h_0\rightarrow +\infty} s_{h_0}(t)= s_{\infty}(t), \\
&& \lim\limits_{h_0\rightarrow +\infty} \Psi_{l h_0} (x,t)= \Psi_{l\infty}(x,t), \\
&& \lim\limits_{h_0\rightarrow +\infty} \Psi_{s h_0} (x,t)= \Psi_{s\infty}(x,t). 
\end{eqnarray}

\end{proof}

\section{Conclusions} 
 In this article a closed analytical solution of a similarity type have been obtained for a one-dimensional two-phase Stefan problem in a semi-infinite material  using Kummer functions. The novel feature in the problem studied concerns a variable latent heat that depends on the position as well as a convective boundary condition at the fixed face $x=0$ of the material. Assuming a latent heat defined as a power function of the position allows the generalization of some previous theoretical results.  We have also generalized the classical two-phase Stefan problem with constant latent heat and a convective boundary condition  \cite{Ta1}  and the one-phase Stefan problem with latent heat depending on the position and a convective boundary  condition at the fixed face $x=0$ \cite{BoTa}.

Furthermore, we have shown that when $h_0$ increases, the solution of the problem  (\ref{1})-(\ref{tempInit}) converges  to the solution of a different free boundary problem (\ref{1Infty})-(\ref{tempInitInfty}) where a temperature condition at the fixed face is considered instead of a convective one \cite{ZhShZh2017}.

The key contribution of this paper has been to prove the existence and uniqueness of the explicit solution of the problem (\ref{1})-(\ref{tempInit}) when a restriction on the data is satisfied. 
We have presented the exact solution which is worth finding not only  to understand better the process involved but also to verify the accuracy of numerical methods that solve Stefan problems.

\appendix
\numberwithin{equation}{section}

\section{ }
This appendix presents a review of some of the significant mathematical  results of the Kummer functions which are used in the main body of the paper.

\vspace{0.3cm} 
\noindent \textbf{Definition of Kummer functions}
\vspace{0.3cm}

\noindent Kummer functions are defined by:

\begin{eqnarray}
 M(a,b,z)&=&\sum\limits_{s=0}^{\infty}\frac{(a)_s}{(b)_s s!}z^s ,\qquad \text{ with }b \text{ nonpositive integer,} \label{DefM}\\
 U(a,b,z)&=&\frac{\Gamma(1-b)}{\Gamma(a-b+1)}M(a,b,z)+\frac{\Gamma(b-1)}{\Gamma(a)} z^{1-b}M(a-b+1,2-b,z) \label{DefU}
\end{eqnarray}
where $(a)_s$ is the pochhammer symbol:
\begin{equation}
(a)_s=a(a+1)(a+2)\dots (a+s-1), \quad \quad (a)_0=1
\end{equation}
and  $\Gamma(\cdot)$ is the Gamma function. In order that $U$ is well-defined it is necessary that $a$ and $a-b+1$ be non-positive integers.

\vspace{0.4cm}
\noindent\textbf{Differentiation formulas}
\vspace{0.3cm}

 \noindent From \cite{OLBC} we have :

\begingroup
\addtolength{\jot}{0.2em}
\begin{align}
& \dfrac{d}{dz}M(a,b,z) = \dfrac{a}{b}M(a+1,b+1,z) \label{DerivM1}\\
&\dfrac{d}{dz}\left(z^{b-1}M(a,b,z) \right)=(b-1) z^{b-2}M(a,b-1,z) \label{DerivM2} \\
& \dfrac{d}{dz}U(a,b,z)=-aU(a+1,b+1,z) \label{DerivU}
\end{align}
\endgroup

\smallskip
\noindent\textbf{Connection Formulas}
\smallskip

 \noindent From \cite{OLBC} and \cite{ZhXi}   we know that :

\begin{itemize}
\item \textbf{Relationship with the generalized hypergeometric function:}
\begingroup
\addtolength{\jot}{0.2em}
\begin{align}
&  U(a,b,z)  \sim z^{-a}, \quad z\rightarrow \infty, \vert z \vert\leq \dfrac{3}{2}\pi -\delta \quad\text{ where } \delta \text{ is an arbitrary small positive constant}. \label{HypGeom1}
\end{align}
\endgroup

\smallskip
\item \textbf{Integral Representation of $U$:}
\begin{equation}
U(a,b,z)=\dfrac{1}{\Gamma(a)}\int\limits_0^{\infty} e^{-zt}t^{a-1}(1+t)^{b-a-1}dt \qquad \text{ with } \text{Re}(a)>0 \text{ and } \vert \text{ph} (z)\vert<\dfrac{\pi}{2} \label{IntRepU}
\end{equation}

\smallskip
\item \textbf{Relationship between $U$ and $M$:}
\begin{equation}
 \dfrac{1}{\Gamma(b)}M(a,b,z)=\dfrac{e^{a\pi i}}{\Gamma(b-a)}U(a,b,z)+\dfrac{e^{-(b-a)\pi i}}{\Gamma(a)}e^z U(b-a,b,e^{-\pi i}z) \label{M&U}
\end{equation}

\smallskip
\item \textbf{Relationship with the exponential function:}
\begin{align}
&  M(a,b,z)= e^z M(b-a,b,-z) \qquad \qquad \qquad \qquad \label{MRelExp1}& \\
&e^{-z^2}= -2\alpha z^2M\left(-\dfrac{\alpha}{2}+\dfrac{1}{2},\dfrac{3}{2},-z^2 \right)M\left( -\dfrac{\alpha}{2}+1,\dfrac{3}{2},-z^2\right)+\nonumber &\\
&\quad\quad \ +M\left( -\dfrac{\alpha}{2},\dfrac{1}{2},-z^2\right)M\left( -\dfrac{\alpha}{2}+\dfrac{1}{2},\dfrac{1}{2},-z^2\right)& \label{MRelExp2}
\end{align}
where $\alpha$ is real and non-negative.

\smallskip
\item \textbf{Relationship with the family of the repeated integrals of the complementary error function:}
\begingroup
\addtolength{\jot}{0.2em}
\begin{eqnarray}
 M\left(-\dfrac{n}{2},\dfrac{1}{2},-z^2 \right)  &=&2^n \Gamma\left(\dfrac{n}{2}+1 \right)E_n(z) \label{AlphaInt1}\\
 zM\left(-\dfrac{n}{2}+\dfrac{1}{2},\dfrac{3}{2},-z^2 \right)&=&2^{n-1}\Gamma\left( \dfrac{n}{2}+\dfrac{1}{2}\right)F_n(z) \label{AlphaInt2}
\end{eqnarray}
\endgroup
where $n$ is an integer, $E_n$ and $F_n$ are defined by:
\begingroup
\addtolength{\jot}{0.2em}
\begin{align}
&  E_n(z)=\left[i^n erfc(z)+i^nerfc(-z) \right]/2 \\
& F_n(z)=\left[i^n erfc(-z)+i^n erfc(z) \right]/2
\end{align}
\endgroup
in which:

\begingroup
\addtolength{\jot}{0.1em}
\begin{align}
& i^0 erfc(x)=erfc(x) \\
& i^n erfc(x)=\int\limits_{x}^{+\infty} i^{n-1}erfc(t)dt 
\end{align}
\endgroup

\end{itemize}

\begin{scriptsize}
\begin{tabular}{lll}

& &\\
{\large \textbf{Nomenclature}}&   &\\
& &\\
$d_l, d_s$ &  & Diffusivity coefficient, $[m^2/s]$.\\
$h_0$ & & Coefficient that characterizes the heat transfer in condition (\ref{Convect}), $[kg/(^{\circ}Cs^{5/2})]$.\\
$k_l,k_s$ & & Thermal conductivity, $[W/(m ^{\circ}C)]$. \\
$s$ &  & Position of the free front, $[m]$.\\
$t$& & Time, $[s]$.\\
$T_{\infty}$ & & Coefficient that characterizes the bulk temperature in condition (\ref{Convect}), $[^{\circ}C/s^{\alpha/2}]$.\\
$T_i$ & & Coefficient that characterizes the initial temperature of the material in condition (\ref{tempInit}), $[^{\circ}C/m^{\alpha}]$.\\
$x$ & & Spatial coordinate, $[m]$.\\
& &\\
\mbox{Greek symbols } &   &\\
$\alpha$ & & Power of the position that characterizes the latent heat per unit volume, dimensionless.\\
$\gamma$ & & Coefficient that characterizes the latent heat per unit volume, $[kg/(s^2m^{\alpha+1})]$.\\
$\nu$ & & Coefficient that characterizes the free interface, dimensionless.\\
$\eta$ & & Similarity variable in expression (\ref{similarity}), dimensionless.\\
$\Psi_l,\Psi_s$ & & Temperature, $[^{\circ}C]$.\\
& &\\
Subscripts &   &\\
$l$ & & liquid phase.\\
$s$ & & solid phase.\\
& &\\
\end{tabular}
\end{scriptsize}

\section*{Acknowledgements}

The present work has been partially sponsored by the Projects PIP No 0534 from CONICET-UA and ANPCyT PICTO Austral 2016 No 0090, Rosario, Argentina.


\begin{thebibliography}{}






\bibitem{Ab} Abramowitz M., Stegun I. A. (Eds): Handbook of mathematical functions with formulas, graphs and mathematical tables. National Bureau of Standards, Washington (1964).

\bibitem{AlSo} Alexiades V.,  Solomon A.D.: Mathematical Modelling of Melting and Freezing Processes. Hemisphere-Taylor, Francis, Washington  (1993).



\bibitem{BoTa} Bollati J., Tarzia D.A.: Explicit solution for Stefan problem with latent heat depending on the position and a convective boundary condition at the fixed face using Kummer functions. Comm.  Appl. Anal. (2017). https://arxiv.org/pdf/1610.09338v1.pdf


\bibitem{BrNa} Briozzo A.C., Natale, M.F.: Nonlinear Stefan problem with convective boundary condition in Storm's materials. Z. Angrew. Math. Phys. 67 No. 19, 1-11 (2016).

\bibitem{Ca} Cannon J.R.: The one-dimensional heat equation. Addison-Wesley, Menlo Park, California (1984).

\bibitem{CaJa} Carslaw H.S.,  Jaeger C.J.: Conduction of heat in solids. Clarendon Press, Oxford (1959).

\bibitem{Cr} Crank J.:  Free and moving boundary problem. Clarendon Press, Oxford (1984).




\bibitem{Gu}  Gupta S.C.: The classical Stefan problem. Basic concepts, modelling and analysis. Elsevier, Amsterdam  (2003).






\bibitem{LoVo}  Lorenzo-Trueba J.,  Voller V.R.: Analytical and numerical solution of a generalized Stefan Problem exhibiting two moving boundaries with application to ocean delta deformation, J. Math. Anal. Appl. 366, 538-549  (2010).



\bibitem{Lu}  Lunardini V.J.: Heat transfer with freezing and thawing. Elsevier, London (1991).




\bibitem {OLBC}  Olver F.W.J.,  Lozier D.W.,  Boisvert R.F.,  Clark C.W.: NIST Handbook of \mbox{Mathematical} Functions. Cambridge University Press, New York (2010).






\bibitem{Per}  Perchuk L.L.: Progress in metamorphic and magmatic petrology. Cambridge University Press, Wallingford, UK  (2003).



\bibitem{Pr} Primicerio M.: Stefan-like problems with space-dependent latent heat. Meccanica 5, 187-190 (1970).










\bibitem{Ro} Rogers C.: Application of a reciprocal transformation to a two-phase Stefan problem. J. Phys. A -Math. Gen. 18, L105-L109 (1985).


\bibitem{Ru}  Rubinstein L.I.: The Stefan problem. American Mathematical Society, Providence  (1971).




\bibitem{SaTa} Salva N.N.,  Tarzia  D.A.: Explicit solution for a Stefan problem with variable latent heat and constant heat flux boundary conditions. J. Math. Anal. Appl. 379, 240-244  (2011).









\bibitem{Tao}  Tao L.N.:  The exact solutions of some Stefan problems with prescribed heat flux. J. Appl. Mech. 48, 732-736  (1981).


\bibitem{Ta1}   Tarzia D.A.: Relationship between Neumann solutions for two phase Lam\'e-Clapeyron-Stefan problems with convective and temperature boundary conditions. Thermal Sci. 21, Part A: 1-11 (2017) 




\bibitem{Ta4} Tarzia D.A.: Explicit and approximated solutions for heat and mass transfer problems with a moving interface. Chapter 20, in Advanced Topics in  Mass Transfer, M. El-Amin (Ed.), InTech Open Access Publisher, Rijeka, 439-484 (2011).




\bibitem{Ta2}  Tarzia D.A.: A bibliography on moving-free boundary problems for the heat-diffusion equation. The Stefan and related problems. MAT-Serie A  2,  1-297 (2000). 


\bibitem{Ta3}  Tarzia D.A.: An inequality for the coefficient $\sigma$ of the free boundary $s(t)=2\sigma \sqrt{t}$ of the Neumann solution for the two-phase Stefan problem. Quart. Appl. Math. 39, 491-497  (1982).













\bibitem{VSP}  Voller V.R., Swenson  J.B., Paola C.: An analytical solution for a Stefan problem with variable latent heat. Int. J. Heat Mass Transfer 47,  5387-5390 (2004).







\bibitem{ZhShZh2017} Zhou Y., Shi X., Zhou G.: Exact solution for a two-phase problem with power-type latent heat. J. Eng. Math. (2017). See https://doi.org/10.1007/s10665-017-9921-y


\bibitem{ZWB} Zhou Y.,  Wang Y.J.:  Bu W. K.: Exact solution for a Stefan problem with latent heat a power function of position. Int. J. Heat Mass Transfer 69,  451-454  (2014).

\bibitem{ZhXi}  Zhou Y.,  Xia L.J.: Exact solution for Stefan problem with general power-type latent heat using Kummer function. Int. J. Heat Mass Transfer  84,  114-118 (2015).





























\end{thebibliography}
\end{document}